\numberwithin{equation}{section}
\newcommand{\R}{{\mathbb R}}
\newcommand{\C}{{\mathbb C}}
\newcommand{\T}{{\mathbf T}}
\renewcommand{\d}{\partial}
\newcommand{\Ai}{{\operatorname{Ai}}}
\newcommand{\ep}{\varepsilon}
\newcommand{\de}{\delta}
\newcommand{\z}{\zeta}
\newcommand{\Th}{\Theta}
\newcommand{\B}{{\mathbf B}}
\newcommand{\A}{{\mathbf A}}
\newcommand{\1}{{\mathbf 1}}
\newcommand{\D}{{\mathbf D}}
\newcommand{\V}{{\mathbf V}}
\newcommand{\U}{{\mathbf U}}
\newtheorem{theo}{{\sc \bf Theorem}}[section]
\newtheorem{lem}[theo]{{\sc \bf Lemma}}
\newtheorem{prop}[theo]{{\sc \bf Proposition}}
\newtheorem{lemma}{Lemma}[section]
\newcommand{\RR}{R}
\newcommand{\id}{\1}
\newcommand{\bfR}{{\bf R}}
\newcommand{\pd}{\partial}
\newcommand{\M}{{\mathbf M}}
\newcommand{\calM}{{\mathcal M}}
\newcommand{\calT}{{\mathcal T}}
\newcommand{\Qua}{\psi}
\newcommand{\QQua}{\Psi}
\newcommand{\K}{{\mathbf K}}
\begin{document}


\title{On the joint distribution of the maximum and its position of the Airy$_2$ process minus a parabola} 



\author{Jinho Baik}
\address{Department of Mathematics,
University of Michigan,
530 Church St., Ann Arbor, MI 48109, U.S.A.}
\email{baik@umich.edu}

\author{Karl Liechty}
\address{Department of Mathematics,
University of Michigan,
530 Church St., Ann Arbor, MI 48109, U.S.A.}
\email{kliechty@umich.edu}

\author{Gr\'egory Schehr}
\address{Laboratoire de Physique Th\'eorique et Mod\`eles Statistiques, Universit\'e Paris Sud 11 and CNRS.}
\email{gregory.schehr@u-psud.fr}

%


\date{\today}

\begin{abstract}
The maximal point of the Airy$_2$ process minus a parabola is believed to describe  
the scaling limit of the end-point of the directed polymer in a random medium.
This was proved to be true for a few specific cases. 
Recently two different formulas for the joint distribution of the location and the height of this maximal point
were obtained, one by Moreno Flores, Quastel and Remenik, and the other by Schehr.
The first formula is given in terms of the Airy function and an associated operator, and the second formula is expressed in terms of the Lax pair equations of the Painlev\'e II equation. We give a direct proof that these two formulas are the same.
\end{abstract}

\maketitle


\section{Introduction and result}

Let ${\mathcal A}_2(u)$ be the Airy$_2$ process \cite{PrahoferSpohn}. 
It is a stationary process whose marginal distribution is the Gaussian unitary ensemble (GUE) Tracy-Widom distribution \cite{TW}. 
Let ${\mathcal M}$ and ${\mathcal T}$ be the random variables defined by
\begin{equation}\label{in:2}
	{\mathcal M} := \max_{u \in \mathbb{R}} \left( {\mathcal A}_2(u) - u^2 \right) \;,
\end{equation}
and
\begin{equation}\label{in:3}
	{\mathcal T} := \arg \max_{u \in \mathbb{R}} \left( {\mathcal A}_2(u) - u^2 \right) \;.
\end{equation}
The Airy$_2$ process is widely expected to be the universal limiting process for the spatial fluctuations of the models in the Kardar-Parisi-Zhang (KPZ) universality class \cite{kpz} in $1+1$ dimensions. 
This has been proved for a few specific cases. For example, it is proven in the polynuclear growth model \cite{PrahoferSpohn}
and the directed last passage percolation with geometric or exponential weights \cite{Johansson_dprm}. The Airy$_2$ process also appears in the scaling limit of TASEP,  
random tiling problems, and 1D non-intersecting processes (see for example \cite{Johansson_dprm, Okounkov, Johansson_tile, TWBE, Borodin1, Borodin2, Petrov} and references therein).

In the directed last passage percolations with geometric weights, 
the random variables $\calM$ and $\calT$ represent the following observables. 
Consider a point-to-line percolation. 
It was shown \cite{Johansson_dprm} that, assuming that the $\mathcal{A}_2(u)-u^2$ has a unique maximum almost surely, 
$\calT$ describes the fluctuations of the location of the end-point of the maximizing path.  
The uniqueness assumption was recently proved in \cite{CorwinHammond}. 
On the other hand, $\calM$ is the random variable for the limiting fluctuations of the energy of the maximizing path. 

The marginal distribution of $\calM$ is known to be equal to 
the Gaussian orthogonal ensemble (GOE) Tracy-Widom distribution \cite{TW2}. 
This was proved in \cite{Johansson_dprm}  indirectly using a correspondence\footnote{This correspondence was also previously observed in \cite{Krug}.} between certain observables in the directed last passage percolation in point to point geometry and associated ones in point to line geometry for which the limiting distribution had been evaluated \cite{BR}.
A more direct proof was later obtained \cite{ForresterMajumdarSchehr} by showing that the scaled 
fluctuations of the maximal height of $N$ non-intersecting Brownian excursions is governed by the GOE Tracy-Widom distribution
in the asymptotic large $N$ limit.
The physical asymptotic analysis used in that paper was subsequently proved rigorously in \cite{Liechty} using Riemann-Hilbert techniques. 
Another direct proof based on determinantal process was obtained in \cite{CQR}
using the explicit determinantal formula of the Airy$_2$ process.

Because of the broad relevance of the KPZ universality class, the
distributions of $\calM$ and  $\calT$ have generated some recent interest
in the theoretical literature  \cite{Ferrari, SchehrPRL,KatoriPRE,FeierlJPHYSA,Rambeau1,Rambeau2,LeDoussalCalabreseShort,LeDoussalCalabreseLong,Quastel_tail} as well as in the experimental literature \cite{Kazz1, Kazz2}.    

\bigskip

Exact expressions for the joint distribution of $(\calM, \calT)$ were obtained   in two recent papers: in
\cite{quastel_jpdf} by Moreno Flores, Quastel, and Remenik, 
and  in \cite{schehr_12} by Schehr.
The paper \cite{quastel_jpdf} is mathematical and rigorous, and the formula involves the Airy function and the resolvent of an associated operator. 
On the other hand, the paper \cite{schehr_12} is physical 
and the distribution is expressed in terms of the Lax pair for the Painlev\'e II equation. 
The purpose of this paper is to verify directly that these two expressions are indeed the same. 
In doing so, we describe an explicit solution to the Lax pair equation for the Painlev\'e II equation
(see Proposition~\ref{lax_pair} below). 
This calculation is closely related to the seminal work of Tracy and Widom \cite{TW} on the identification of the Fredholm determinant of the Airy operator in terms of the Painlev\'e II equation.

We now describe the formulas of \cite{quastel_jpdf} and \cite{schehr_12}. 
Let $\hat {P}(m,t)$ denote the joint density function of $(\calM, \calT)$. 
We first describe the formula of \cite{quastel_jpdf}. 
Let $\Ai$ be the Airy function \cite{abramowitz}, and let $\B_s$ be the integral operator acting on $L^2[0,\infty)$ with kernel
\begin{equation}\label{in:4}
	\B_s(x,y)=\Ai(x+y+s)
\end{equation}
for each $s\in \R$. It is known that $\1-\B_s$ is invertible. We set 
\begin{equation}\label{in:5}
	\rho_s(x,y)=(\1-\B_s)^{-1} (x,y),  \qquad x,y\ge 0 \,.
\end{equation}
Define, for $t, m\in \R$, 
\begin{equation}\label{in:6}
	\Qua(x; t, m) = 2 e^{x\, t} [t {\rm Ai}(t^2+m+x) + {\rm Ai}'(t^2+m+x)] \;. \;
\end{equation}
Then the formula of \cite{quastel_jpdf} is 
\begin{equation}\label{in:7}
	\hat {P}(m,t) 
	= 2^{1/3} {\mathcal F}_1(2^{2/3} m) \int_{0}^\infty dx_1 \int_{0}^\infty dx_2 \, \Qua(2^{1/3}x_1;-t,m) \rho_{2^{2/3}m}(x_1,x_2) \, \Qua(2^{1/3} x_2;t,m) \;,
\end{equation}
where ${\mathcal F}_1(s)= \det(\1-\B_s)$ denotes the GOE Tracy-Widom distribution function, see \cite{TW2, FS}.

The formula of \cite{schehr_12} is as follows. 
Let $q(s)$ be the particular solution of the Painlev\'{e} II equation
\begin{equation}\label{in:10}
	q''(s)=2q(s)^3+sq(s)\,,
\end{equation}
satisfying 
\begin{equation}\label{in:11}
	q(s) \sim \Ai(s)\,, \quad \textrm{as} \ s \to + \infty \;.
\end{equation}
This particular solution is known as the Hastings-McLeod solution \cite{HastingsMcLeod, Itsbook},  
and the uniqueness and the global existence are well established. 
Now consider the following Lax pair equations\footnote{This Lax pair is a simple transformation of the Lax pair equations of Flaschka and Newell \cite{FN_80}. If we call the solution of the Flaschka-Newell Lax pair $\hat{\Phi}$, then we have $\hat{\Phi}=\begin{pmatrix} 1 & i \\ 1& -i \end{pmatrix} \Phi$, where $\Phi$ solves (\ref{in:8}). See~\cite{BI03}.}
associated  
to the Hastings-McLeod solution of the Painlev\'e II equation, {\it i.e.} the following linear differential equations for a two-dimensional vector 
$\Phi=\Phi(\zeta,s)$, 
\begin{equation}\label{in:8}
 \frac{\partial}{\partial \zeta} \Phi = A \Phi \;, \quad \; \frac{\partial}{\partial s} \Phi = B \Phi \;,
 \end{equation} 
 where the $2 \times 2$ matrices $A = A(\zeta, s)$ and $B = B(\zeta,s)$ are given by
 \begin{equation}\label{in:9}
 A(\zeta,s) = \left( 
 \begin{array}{c c}
 4 \zeta q &  4 \zeta^2 + s + 2q^2 + 2q'\\
 -4 \zeta^2 - s - 2 q^2 + 2q' & -4 \zeta q
 \end{array}\right) \;
\end{equation}
and 
\begin{equation}\label{in:8-1}
  B(\zeta,s) = \left( 
 \begin{array}{c c}
 q &  \zeta \\
 -\zeta & - q
 \end{array}\right) \;.
\end{equation}
The above system is overdetermined, and the compatibility of the equations implies that $q(s)$ solves the Painlev\'e II equation. 
Now let $\Phi =\begin{pmatrix} \Phi_1 \\ \Phi_2 \end{pmatrix}$ 
be the unique solution of (\ref{in:8}) satisfying 
the real asymptotics
\begin{equation}\label{in:13}
	\Phi_1(\z;s)=\cos\left(\frac{4}{3}\z^3+s\z\right)+O(\z^{-1}), \quad \Phi_2(\z;s)=-\sin\left(\frac{4}{3}\z^3+s\z\right)+O(\z^{-1})\,,
\end{equation}
as $\z \to \pm \infty$ for $s\in \R$. 
There is such a solution (see \cite{BI03},  \cite{DZ}, or \cite{Itsbook}), and it  further satisfies the properties that $\Phi_1(\z;s)$ and $\Phi_2(\z;s)$ are real for real $\z$ and $s$, and 
\begin{equation}\label{in:12}
	\Phi_1(-\z;s)=\Phi_1(\z;s), \quad \Phi_2(-\z;s)=-\Phi_2(\z;s).
\end{equation}
Define
\begin{equation}\label{in:15}
	h(s,w) :=  \int_{0}^\infty  \zeta \Phi_2(\zeta,s) e^{- w \zeta^2} \, d\zeta\,.
\end{equation}
The formula of \cite{schehr_12} is that 
\begin{equation}\label{in:14}
\begin{aligned}
	{\hat P}(m,t) &= 4 P(2^{2/3}m, 2^{4/3} t) \;, 
\end{aligned}
\end{equation}
where
\begin{equation}\label{in:14-1}
\begin{aligned}
	{P}(s,w) &:= \frac{4}{\pi^2} {\mathcal F}_1(s) \int_{s}^\infty h(u,w) h(u,-w) \, du \;.
\end{aligned}
\end{equation}

The main result of this paper is 
\begin{theo}\label{thm}
The two formulas~\eqref{in:7} and~\eqref{in:14} of ${\hat P}(m,t)$ are the same. 
\end{theo}

We give a direct verification of this statement. 
The derivation of the formula (\ref{in:14}) in the work \cite{schehr_12} relies on an Ansatz which is not rigorously justified. 
The above theorem gives an indirect proof of the work of \cite{schehr_12} and puts the formula (\ref{in:14}) on a rigorous mathematical footing.

\subsubsection*{Acknowledgements}
We would like to thank Percy Deift for showing us a simple proof of Lemma \ref{lem1}.
The work of Baik was supported in part by NSF grants DMS-1068646 and the work of Schehr was supported by ANR grant 2011-BS04-013-01 WALKMAT.

\section{Proof of theorem}\label{proof}

We start with the formula (\ref{in:14}) and verify that it is equal to (\ref{in:7}).
For this purpose, we show that the solution to the Lax pair,  $\Phi$,  which appears in (\ref{in:15}) 
can be expressed in terms of the operator $\B_s$.
Let us first introduce some notations.  
Let $\B_s$ be as defined in (\ref{in:4}).  Then $\A_s:=\B_s^2$ is the Airy operator acting on $L^2[0,\infty)$, which has kernel
\begin{equation}\label{pr:1}
\begin{split}
	\A_s(x,y)
	&=\int_0^\infty \Ai(x+s+\xi) \Ai (y+s+\xi)\,d\xi \\
	&= \frac{\Ai(x+s)\Ai'(y+s)-\Ai'(x+s)\Ai(y+s)}{x-y} \,.
\end{split}
\end{equation}
Define the functions $Q$ and $R$ as
\begin{equation}\label{pr:2}
	Q:= (\1 - \A_s)^{-1} \B_s \de_0\,, \qquad R:= (\1 - \A_s)^{-1} \A_s \de_0\,,
\end{equation}
where $\de_0$ is the Dirac delta function at zero.\footnote{Throughout this paper, we use the convention that 
the Dirac delta function satisfies $\int_{[0, \infty)} \delta_0(x) f(x)dx =f(0)$ for functions $f$ which are right-continuous at $0$.} 
Introduce also the functions
\begin{equation}\label{pr:3}
	\Th_1(x) :=\cos\left( \frac43 \zeta^3+(s+2x)\zeta\right)\,, \qquad 
	\Th_2(x) :=-\sin\left( \frac43 \zeta^3+(s+2x)\zeta\right)\,.
\end{equation}

The starting point of our analysis is the following explicit formula for the solution $\Phi$.
\begin{prop}\label{lax_pair}
The particular solution 
\begin{equation}\label{pr:4}
	\Phi(\z,s) = \begin{pmatrix} \Phi_1(\z, s) \\ \Phi_2(\z, s) \end{pmatrix}
\end{equation}
to the Lax pair (\ref{in:8}) satisfying the conditions (\ref{in:13}) is given by
\begin{equation}\label{pr:5}
	\Phi_{1}(\z,s) = \Th_{1}(0)+\langle \Th_{1}, R - Q \rangle_0\,,
	\qquad 
	\Phi_{2}(\z,s) = \Th_{2}(0)+\langle \Th_{2}, R + Q \rangle_0\,,
\end{equation}
where the functions $Q$, and $R$ are defined in (\ref{pr:2}), and $\langle \cdot, \cdot \rangle_0$ is the inner product on $L^2[0,\infty)$.
\end{prop}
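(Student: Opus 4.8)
The plan is to use the uniqueness of the solution of the Lax pair (\ref{in:8}) subject to the asymptotics (\ref{in:13}): it then suffices to check that the functions on the right-hand sides of (\ref{pr:5}) satisfy the two linear equations of (\ref{in:8}) and have those asymptotics. The asymptotic and symmetry parts are immediate. At $x=0$ the pair $\bigl(\Th_1(0),\Th_2(0)\bigr)=\bigl(\cos(\tfrac43\z^3+s\z),\,-\sin(\tfrac43\z^3+s\z)\bigr)$ is exactly the solution of (\ref{in:8}) in the degenerate case $q\equiv0$, and it already supplies the leading terms of (\ref{in:13}); since $\Th_1(x)$ is even and $\Th_2(x)$ is odd in $\z$, the symmetry (\ref{in:12}) holds; and because $R\pm Q$ are fixed Schwartz functions on $[0,\infty)$ while the $\Th_i$ oscillate with $x$-phase-derivative $2\z$, one integration by parts gives $\lv\Th_i,R\pm Q\rv_0=O(\z^{-1})$ as $\z\to\pm\infty$. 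So (\ref{in:13}) and (\ref{in:12}) are verified and everything reduces to the two differential equations.

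For those I would set up the Tracy--Widom operator calculus for $\B_s$ and $\A_s=\B_s^2$ on $L^2[0,\infty)$. Write $a(x)=\Ai(x+s)$, $a'(x)=\Ai'(x+s)$, let $D=d/dx$, and let $\M$ denote multiplication by $x$. The elementary identities are: $\pd_s\B_s=D\B_s$ has kernel $\Ai'(x+y+s)$; integration by parts gives the boundary commutator $D\B_s+\B_sD=-a\otimes\de_0$; and squaring then yields $\pd_s\A_s=-a\otimes a$, $D\A_s-\A_sD=(\B_sa)\otimes\de_0-a\otimes a$, and $\M\A_s-\A_s\M=a\otimes a'-a'\otimes a$, the last being the collapse of $(x-y)\A_s(x,y)$ to a rank-two kernel via the Airy equation $\Ai''(z)=z\,\Ai(z)$. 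Since $\B_s\de_0=a$ and $\A_s\de_0=\B_s a$, the functions in (\ref{pr:2}) are $Q=(\1-\A_s)^{-1}a$ and $R=\B_s Q$, so $\pd_s(\1-\A_s)^{-1}=-Q\otimes Q$; equivalently (using $\|\B_s\|<1$) $R-Q=-(\1+\B_s)^{-1}a$ and $R+Q=(\1-\B_s)^{-1}a$, which puts (\ref{pr:5}) into the visibly Tracy--Widom form $\Phi_1=\Th_1(0)-\lv\Th_1,(\1+\B_s)^{-1}a\rv_0$, $\Phi_2=\Th_2(0)+\lv\Th_2,(\1-\B_s)^{-1}a\rv_0$. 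With this calculus in place, differentiating (\ref{pr:5}) in $s$ is routine: $\pd_s$ falls on the $\Th_i$ through $\pd_s\Th_1=\z\Th_2$, $\pd_s\Th_2=-\z\Th_1$ and on $R\pm Q$ through the identities above, and the scalar coefficients that appear --- $\Th_i(0)$, $Q(0)$, $R(0)$, and inner products such as $\lv a,Q\rv_0$ and $\lv a',Q\rv_0$ --- reassemble into the matrix $B(\z,s)$ of (\ref{in:8-1}) once one invokes Tracy and Widom's theorem \cite{TW}, which identifies $q(s)=Q(0)$ with the Hastings--McLeod solution of (\ref{in:10}) and records the first-order relations among these quantities. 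This gives $\pd_s\Phi=B\Phi$.

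The main obstacle is the $\z$-equation $\pd_\z\Phi=A\Phi$. Differentiating the $\Th_i$ in $\z$ brings in the factor $4\z^2+s+2x$, and neither the $4\z^2$ nor the multiplication by $2x$ commutes with the resolvent inside $Q$ and $R$. One handles the first term by writing $4\z^2\Th_i=-\pd_x^2\Th_i$ and integrating by parts twice against $R\pm Q$, collecting the boundary values at $x=0$ and pushing the $x$-derivatives through $(\1-\A_s)^{-1}$ with $D\A_s-\A_sD$; and the second by moving $\M$ through $(\1-\A_s)^{-1}$ with $\M\A_s-\A_s\M=a\otimes a'-a'\otimes a$ together with the Airy equation. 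The boundary terms and rank-one/rank-two corrections thereby generated must then recombine into exactly the diagonal entries $\pm4\z q$ and the off-diagonal entries $4\z^2+s+2q^2\pm2q'$ of $A(\z,s)$ in (\ref{in:9}) --- with the correct sign on the $2q'$ and the correct coefficient $+2$ on the $q^2$. This bookkeeping is the analogue of Tracy and Widom's passage from the Airy resolvent to the Painlev\'e~II representation of the Airy-kernel determinant; it is where essentially all the work lies and where their auxiliary identities for $q$, $q'$ and the Airy resolvent get consumed. Alternatively the step can be lightened by checking $\pd_\z\Phi=A\Phi$ only as $s\to+\infty$, where $q\to0$, $Q,R\to0$ and $\Phi\to\bigl(\Th_1(0),\Th_2(0)\bigr)$ solves the free $\z$-equation, and then propagating it in $s$ via the fact that $G:=\pd_\z\Phi-A\Phi$ satisfies $\pd_sG=BG$ --- a consequence of the $s$-equation for $\Phi$ already proved together with the compatibility of (\ref{in:8}), which holds because $q$ solves (\ref{in:10}). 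Once both equations of (\ref{in:8}) are verified, the uniqueness of the solution subject to (\ref{in:13}) completes the proof.
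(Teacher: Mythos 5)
Your plan is correct and, in its primary form, follows essentially the same route the paper takes: both set up the Tracy--Widom operator calculus for $\B_s$ and $\A_s=\B_s^2$, use the same commutator identities ($\pd_s\A_s=-a\otimes a$, $[D,\A_s]=(\B_s a)\otimes\de_0-a\otimes a$, $[\M,\A_s]=a\otimes a'-a'\otimes a$, which are exactly those in the paper's Lemma~\ref{lem_diffeq}), verify the $s$-equation by the same $\pd_s\Th$-plus-resolvent computation, and prove (\ref{in:13}) by a single integration by parts in $\z$; your double integration by parts for the $\z$-equation is operationally the same as the paper's repeated use of $2\z\Th_i=\mp D\Th_j$ with the identities for $DQ$, $DP$, $\M R$. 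Your tidy rewriting $R+Q=(\1-\B_s)^{-1}a$, $R-Q=-(\1+\B_s)^{-1}a$ matches (\ref{pr:9-1}).

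The alternative you sketch for the $\z$-equation --- show $G:=\pd_\z\Phi-A\Phi$ obeys $\pd_sG=BG$ via the already-proved $s$-equation plus the zero-curvature identity, then conclude $G\equiv 0$ from the $s\to+\infty$ limit --- is a genuinely different and potentially much shorter argument than the paper's brute-force bookkeeping. It needs two things you did not spell out: the unbounded factor $4\z^2+s+2x$ coming from $\pd_\z\Th_i$ means $G\to 0$ requires superpolynomial decay of $Q$, $R$ and their $\M$-weighted norms (true, since $a(x)=\Ai(x+s)$ decays superexponentially), and backward uniqueness at $s=+\infty$ for $\pd_sG=BG$, which a Gronwall estimate supplies because the symmetric part of $B$ is $\diag(q,-q)$ and $\int|q|\,ds<\infty$.
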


The proof of this proposition is given in section~\ref{lax_proof}.

\bigskip

Using the definition of $R$ and $Q$, $\Phi_2(\z, s)$ in (\ref{pr:5}) can be written as 
\begin{equation}\label{pr:9-1}
\begin{aligned}
	\Phi_{2}(\z,s) = 
	\left\langle  \Th_2, \de_0+R+Q\right\rangle_0 
	&=	\left\langle \Th_2,  (\1+(\1-\B_s^2)^{-1}\B_s+(\1-\B_s^2)^{-1}\B_s^2)\de_0\right\rangle_0 \\
	&=	\left\langle \Th_2,  (\1-\B_s)^{-1}\de_0\right\rangle_0\,.
\end{aligned}
\end{equation}
Note that  the only term which depends on $\z$ is $\Th_2$. 
The function $h$ in (\ref{in:15}) is defined as an integral of $\z e^{-w\z^2}\Phi_2(\z,s)$ with respect to $\z$.  
With (\ref{in:15}) in mind, we therefore compute
\begin{equation}\label{pr:6}
\begin{aligned}
	\int_0^\infty & \z e^{-w\z^2} \sin \left(\frac43 \z^3+s\z +2x \z  \right)\, d\z \\
	&=\frac{1}{2i} \int_{-\infty}^{\infty} \z  \exp\left[i\left(\frac43 \z^3+i w\z^2+s\z +2x \z  \right)\right]\, d\z \\
&=\frac{1}{2i} \int_{-\infty}^\infty \z \exp\left[i\left(\frac13 \left(2^{2/3}\z+\frac{iw}{2^{4/3}}\right)^3+\z\left(\frac{w^2}{4} +s+2x\right)+\frac{iw^3}{48}\right)\right]\,d\z\,.
\end{aligned}
\end{equation}
Making the change of variables $\eta=2^{2/3}\z+iw2^{-4/3}$, (\ref{pr:6}) becomes
\begin{equation}\label{pr:7}
\begin{aligned}
	\frac{1}{i2^{7/3}}&e^{\frac{w^3}{24}+\frac{w}{4}(s+2x)}
	\int_{-\infty}^\infty \left(\eta -\frac{iw}{2^{4/3}}\right) \exp\left[i\left(\frac{1}{3}\eta^3+\eta\left(\frac{w^2}{2^{8/3}}+\frac{s+2x}{2^{2/3}}\right)\right)\right]\, d\eta  \\
	&= -\frac{\pi}{2^{4/3}} e^{\frac{w^3}{24}+\frac{w}{4}(s+2x)}
	\left[\frac{w}{2^{4/3}} \Ai\left(\frac{w^2}{2^{8/3}}+\frac{s+2x}{2^{2/3}}\right)
	+\Ai'\left(\frac{w^2}{2^{8/3}}+\frac{s+2x}{2^{2/3}}\right) \right]\,,
\end{aligned}
\end{equation}
since
\begin{equation}\label{pr:7-1}
\begin{aligned}
	\Ai(x)= \frac1{2\pi}
	\int_{-\infty}^\infty  \exp\left[i\left(\frac{1}{3}\eta^3+x\eta\right)\right]\, d\eta \,. 
\end{aligned}
\end{equation}
Comparing with the function $\Qua(x; t,m)$ in (\ref{in:6}),  we thus obtain
\begin{equation}\label{pr:8}
	\int_0^\infty \z e^{-w\z^2} \sin \left(\frac43 \z^3+s\z +2\z x \right)\, d\z 
=-\frac{\pi}{2^{7/3}} e^{\frac{w^3}{24}+\frac{ws}{4}} \Qua(2^{1/3} x;\,2^{-4/3} w, \,2^{-2/3} s)\,.
\end{equation}
Combining (\ref{pr:8}) with (\ref{pr:9-1}), we can write the function $h$ as
\begin{equation}\label{pr:9}
\begin{aligned}
	h(s,w)
	&=\frac{\pi}{2^{7/3}} e^{\frac{w^3}{24} + \frac{w s}{4}} 
	\left\langle \Qua(2^{1/3}x;\,2^{-4/3}w, 2^{-2/3}s), (\1-\B_s)^{-1} \de_0\right\rangle_0\,,
\end{aligned}
\end{equation}
where $x$ is the variable of integration in the inner product.

We now evaluate the integral 
\begin{equation}\label{pr:10}
	\int_s^\infty h(u,w) h(u, -w) du = 2\int_0^\infty h(2y+s, w) h(2y+s, -w) dy,
\end{equation}
in the definition of $P(s,w)$ in (\ref{in:14-1}). 
For this purpose, 
it is more convenient to work in the space $L^2(\R)$ instead of $L^2[0,\infty)$.  
Let us denote by $\tilde{\bf B}_s$ the operator which has the same kernel as (\ref{in:4}) and acts on $L^2(\R)$.  
Let $\Pi_r$ denote the projection onto $L^2[r,\infty)$
and let $\langle \cdot,\cdot \rangle$ be the inner product on $L^2(\R)$.
Then from (\ref{pr:9}), 
\begin{equation}\label{pr:11}
\begin{split}
	&\frac{2^{7/3}}{\pi}  e^{-\frac{w^3}{24} -  \frac{w (2y+s)}{4}}   h(2y+s,w) \\
	&= 
	\left\langle \Qua(2^{1/3}x;\,2^{-4/3}w, 2^{-2/3}(2y+s)),   \,\ \Pi_0 (\id-\Pi_0 \tilde{\bf B}_{2y+s}\Pi_0)^{-1} \Pi_0\delta_0\right\rangle\,,
\end{split}
\end{equation}
where once again $x$ is the variable of integration in the inner product.

We can push all dependence on $y$ to the right side of the inner product as follows.
Let $\T_r$ be the translation operator $(\T_r f)(x)= f(x+r)$. 
We clearly have $\Pi_r= \T_{-r}\Pi_0\T_r$ and $\Pi_0=\T_r\Pi_r \T_{-r}$.
Since 
$(\T_{-r}K\T_{r})(u,v)=K(u-r, v-r)$ for any kernel $K$, 
we find from the definition of $\tilde{\bf B}_s$ 
that 
$(\T_{-y}\tilde{\bf B}_{2y+s} \T_y)(u,v)= \Ai(u+v+s)=\tilde{\bf B}_s(u,v)$.
Thus, 
\begin{equation}\label{pr:12}
\begin{split}
	\Pi_0 (\id-\Pi_0 \tilde{\bf B}_{2y+s}\Pi_0)^{-1} \Pi_0
	&= \Pi_0  (\id - \T_y\Pi_y \T_{-y} \tilde{\bf B}_{2y+s} \T_y\Pi_y \T_{-y})^{-1} \Pi_0 \\
	&= \Pi_0  (\id - \T_y\Pi_y \tilde{\bf B}_s\Pi_y \T_{-y})^{-1} \Pi_0 \\
	&= \Pi_0  \T_y (\id - \Pi_y \tilde{\bf B}_s\Pi_y )^{-1} \T_{-y}\Pi_0 \\
	&= \T_y \Pi_y (\id - \Pi_y \tilde{\bf B}_s\Pi_y )^{-1} \Pi_y \T_{-y}\,.
\end{split}
\end{equation}
Notice also that the transpose of $\T_y$ is $\T_{-y}$ and that $\T_{-y}\delta_0(u)=\delta_y(u)$.  
Hence  (\ref{pr:11}) equals 
\begin{equation}\label{pr:13}
\begin{split}
	\left\langle \T_{-y} \Qua(2^{1/3}x;\,2^{-4/3}w, 2^{-2/3}(2y+s)),   \,\ 
	\Pi_y (\1 - \Pi_y \tilde{\bf B}_s\Pi_y )^{-1} \Pi_y \delta_y\right\rangle.
\end{split}
\end{equation}
From the definition of $\Qua$, 
it is easy to check that 
\begin{equation}\label{pr:14}
 	\Qua(a-r;\,t,m+r)= e^{-rt} \Qua(a;\,t,m).
 \end{equation}
Therefore (\ref{pr:13}) equals
\begin{equation}\label{pr:15}
\begin{aligned}
	&e^{-\frac12 wy}\left\langle \Qua(2^{1/3}x;\, 2^{-4/3}w, 2^{-2/3}s),   \,\ 
	\Pi_y (\id - \Pi_y \tilde{\bf B}_s\Pi_y )^{-1} \Pi_y \delta_y\right\rangle.
\end{aligned}
\end{equation}
Define the resolvent 
\begin{equation}\label{pr:15-1}
\begin{aligned}
	\bfR_y:= 
	(\id - \Pi_y \tilde{\bf B}_s\Pi_y )^{-1} -\1.
\end{aligned}
\end{equation}
Note that $\bfR_y= \Pi_y \bfR_y\Pi_y$ and its kernel $\bfR_y(x_1, x_2)$ is smooth
in $x_1, x_2\ge y$. 
Using this notation, we obtain
\begin{equation}\label{pr:15-2}
\begin{aligned}
	&\frac{2^{7/3}}{\pi}  e^{-\frac{w^3}{24} -  \frac{w s}{4}}   h(2y+s,w) 
	=\QQua_+(y) 
	+ \int_{-\infty}^\infty  \QQua_+(x)  \bfR_y(x,y) dx \;,
\end{aligned}
\end{equation}
where
we set
\begin{equation}\label{pr:15-3}
\begin{aligned}
	&\QQua_{\pm}(x):= \Qua(2^{1/3}x;\, \pm 2^{-4/3}w, 2^{-2/3}s) .
\end{aligned}
\end{equation}
Note that the integrand in (\ref{pr:15-2}) vanishes for $x<y$ due to the resolvent kernel. 

Inserting (\ref{pr:15-2}) into equation ~\eqref{pr:10}, we find, using the fact that the kernel of $\bfR_y$ is symmetric, that
\begin{equation}\label{pr:16}
\begin{aligned}
	\frac{2^{11/3}}{\pi^2} & \int_s^\infty  h(u,w) h(u, -w) du 
	= \int_0^\infty \QQua_+(y)\QQua_-(y) dy  \\
	&+  \int_0^\infty dy \int_{-\infty}^\infty  dx_2 \QQua_+(y)\bfR_y(y, x_2)\QQua_-(x_2) \\
	& + \int_0^\infty dy \int_{-\infty}^\infty  dx_1 \QQua_+(x_1)\bfR_y(x_1, y)\QQua_-(y) \\
	&+  \int_0^\infty dy \int_{-\infty}^\infty  dx_1 \int_{-\infty}^\infty  dx_2  \QQua_+(x_1)\bfR_y(x_1, y) \bfR_y(y, x_2)\QQua_-(x_2) .
	\end{aligned}
\end{equation}
Setting $m=2^{-2/3}s$ and $t=2^{-4/3}w$ in (\ref{in:14}) 
(see the relation between (\ref{in:14})  and (\ref{in:14-1})) and noting that $\rho_s(x,y)=(\1+\bfR_0)(x,y)$, 
we see that the formula (\ref{in:14}) implies the formula (\ref{in:7}) if we show that (\ref{pr:16}) is same as  
\begin{equation}\label{pr:16-1}
\begin{aligned}
	 \int_0^\infty \QQua_+(y)\QQua_-(y) dy  
	 + \int_{-\infty}^\infty  dx_1 \int_{-\infty}^\infty  dx_2  \QQua_+(x_1)\bfR_0(x_1, x_2)\QQua_-(x_2) .
\end{aligned}
\end{equation}
Now setting $y=x_1$ in the first double integral of (\ref{pr:16}) and setting $y=x_2$ in the second double integral, we see that the equality follows if we show that
\begin{equation}\label{pr:16-2}
\begin{aligned}
	 \bfR_{x_1}(x_1, x_2)+ \bfR_{x_2}(x_1, x_2)
	 + \int_0^\infty \bfR_y(x_1, y) \bfR_y(y, x_2)  dy
	= \bfR_0(x_1, x_2)
\end{aligned}
\end{equation}
for all $x_1, x_2\ge 0$. 
This is a general identity, given in the following lemma. 

\begin{lemma}\label{lem1}
Let $\K$ be an integral operator on $\R$ such that for all $y\ge 0$, 
$\Pi_y \K \Pi_y$ is bounded in  $L^2(\R)$ and 
$\1- \Pi_y \K \Pi_y$ is  invertible. 
Set $\bfR_y:=(\1-\Pi_y \K \Pi_y)^{-1}-\1$. 
Suppose that the kernel $\bfR_y(x_1, x_2)$ is continuous in $x_1, x_2\ge y$, for all $y\ge 0$. Then 
\begin{equation}\label{eq:26}
	 \bfR_{x_1}(x_1, x_2)+ \bfR_{x_2}(x_1, x_2)
	 + \int_0^\infty \bfR_y(x_1, y) \bfR_y(y, x_2)  dy
	= \bfR_0(x_1, x_2)
\end{equation}
for all $x_1, x_2\ge 0$.
\end{lemma}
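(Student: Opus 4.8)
The plan is to establish \eqref{eq:26} by differentiating in a parameter and integrating back up, exploiting the fact that all three terms on the left vanish or simplify at the boundary $y=0$. Write $\bfR_y = (\1-\Pi_y\K\Pi_y)^{-1}-\1$, so that $\1+\bfR_y$ is the resolvent of $\Pi_y\K\Pi_y$. First I would fix $x_1,x_2\ge 0$ and define
\begin{equation}
F(a) := \bfR_a(x_1,x_2) + \int_a^\infty \bfR_y(x_1,y)\,\bfR_y(y,x_2)\,dy
\end{equation}
for $0\le a\le \min(x_1,x_2)$ — note the lower limit $a$ on the integral, which is the natural object since $\bfR_y$ is supported on $[y,\infty)^2$ so only $y\le\min(x_1,x_2)$ contributes anyway. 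The goal \eqref{eq:26} would follow once I show $F$ is constant on $[0,\min(x_1,x_2)]$ and then evaluate at the two ``ends'': $F(0)=\bfR_0(x_1,x_2)$ directly, while at $a=x_1$ (say, assuming $x_1\le x_2$) the resolvent identity contributes $\bfR_{x_1}(x_1,x_2)$ plus a leftover integral $\int_{x_1}^{x_2}\bfR_y(x_1,y)\bfR_y(y,x_2)dy$; a second, symmetric application starting from $a=x_2$ handles the remaining piece, and assembling the two gives $\bfR_{x_1}(x_1,x_2)+\bfR_{x_2}(x_1,x_2)+\int_0^\infty\bfR_y(x_1,y)\bfR_y(y,x_2)dy$. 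Actually the cleaner route, which I would take, is to differentiate $F(a)$ in $a$ and show $F'(a)\equiv 0$; this uses only the behavior near the diagonal and sidesteps the case analysis.

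The key computation is $\frac{d}{da}\bfR_a(x_1,x_2)$. Using $\Pi_a\K\Pi_a$ and the formula $\frac{d}{da}(\1-\Pi_a\K\Pi_a)^{-1} = (\1-\Pi_a\K\Pi_a)^{-1}\big(\frac{d}{da}\Pi_a\K\Pi_a\big)(\1-\Pi_a\K\Pi_a)^{-1}$, together with $\frac{d}{da}\Pi_a = -\delta_a\otimes\delta_a$ (the projection drops the point $a$), one gets
\begin{equation}
\frac{d}{da}\bfR_a(x_1,x_2) = -(\1+\bfR_a)(x_1,a)\,K_a(a)\,(\1+\bfR_a)(a,x_2) + \text{(terms from }\K\text{ acting outside)},
\end{equation}
and the point is that the factor produced at the coincident argument $a$ is exactly what matches, with opposite sign, the boundary term $-\bfR_a(x_1,a)\bfR_a(a,x_2)$ coming from $\frac{d}{da}\int_a^\infty\bfR_y(x_1,y)\bfR_y(y,x_2)dy$. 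I expect, after bookkeeping, the identity $\frac{d}{da}\bfR_a(x_1,x_2) = \bfR_a(x_1,a)\bfR_a(a,x_2)$ (for $a<\min(x_1,x_2)$, using that $\bfR_a(x_1,a)=(\1+\bfR_a)(x_1,a)$ etc. since the delta sits at the left edge of the support), which is precisely $-\frac{d}{da}\int_a^\infty$, so $F'\equiv 0$.

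The main obstacle is making the formal manipulation with $\delta_a$ and $\frac{d}{da}\Pi_a$ rigorous — i.e., justifying that the resolvent kernel $\bfR_a(x_1,x_2)$ is genuinely differentiable in $a$ with the claimed derivative, and that one may evaluate kernels at the single point $a$ (continuity in $x_1,x_2\ge y$ is hypothesized precisely for this). I would handle this either by the direct perturbative estimate $\|\Pi_{a+\epsilon}\K\Pi_{a+\epsilon}-\Pi_a\K\Pi_a\|=O(\epsilon)$ in Hilbert–Schmidt norm (so resolvents converge) combined with explicit control of the rank-one correction $\Pi_a - \Pi_{a+\epsilon}$, or — following the acknowledgement to Deift — by the slicker route of writing $(\1-\Pi_y\K\Pi_y)^{-1}$ via the identity relating it to $(\1-\Pi_a\K\Pi_a)^{-1}$ for $a<y$ through a finite-rank update on the strip $[a,y)$, which makes the dependence on the endpoint transparent and reduces \eqref{eq:26} to a finite-dimensional resolvent identity. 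Either way, once differentiability is in hand the algebra is short; the delicate part is purely the analytic justification of the endpoint differentiation.
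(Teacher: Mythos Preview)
Your approach---differentiate $\bfR_a(x_1,x_2)$ with respect to the projection parameter $a$ via the resolvent identity and then integrate---is exactly the paper's (which attributes the argument to Deift). However, your execution contains two bookkeeping errors that need fixing.

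First, the sign of the key derivative is wrong: for $a<\min(x_1,x_2)$ the resolvent identity with $\frac{d}{da}\Pi_a=-\de_a\otimes\de_a$ gives
\[
\frac{d}{da}\bfR_a(x_1,x_2)=-\bfR_a(x_1,a)\,\bfR_a(a,x_2),
\]
not $+$ (sanity check: $\bfR_a\to 0$ as $a\to\infty$). Consequently your $F(a)=\bfR_a+\int_a^\infty$ is \emph{not} constant; the constant combination is $\bfR_a(x_1,x_2)-\int_a^\infty\bfR_y(x_1,y)\bfR_y(y,x_2)\,dy$. Second, your endpoint claim ``$F(0)=\bfR_0(x_1,x_2)$'' is false: the integral term is present at $a=0$. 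With the sign corrected, evaluating the correct constant function at $a=0$ and at $a=\min(x_1,x_2)$ (where the remaining integral vanishes since $\bfR_y(x_1,\cdot)\equiv 0$ for $y>x_1$) yields $\bfR_0=\bfR_{\min(x_1,x_2)}+\int_0^\infty$, which is precisely \eqref{eq:26} for $x_1\neq x_2$ because the other boundary term $\bfR_{\max(x_1,x_2)}(x_1,x_2)$ vanishes by support---so your ``second, symmetric application'' is superfluous.

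The paper's version is slightly cleaner: it retains the boundary delta contributions and records the full formula
\[
\frac{\partial}{\partial y}\bfR_y(x_1,x_2)=-\de_y(x_1)\bfR_y(x_1,x_2)-\bfR_y(x_1,x_2)\de_y(x_2)-\bfR_y(x_1,y)\bfR_y(y,x_2),
\]
then integrates in $y$ from $0$ to $\infty$; the two delta terms produce $\bfR_{x_1}$ and $\bfR_{x_2}$ directly, handling all $x_1,x_2\ge 0$ (including $x_1=x_2$) without case analysis.
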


The proof of this Lemma is given in Section~\ref{lem_proof}.
Therefore, we find that the formula (\ref{in:14})  is equivalent to the formula (\ref{in:7}). 
Theorem \ref{thm} is  proved.

\section{Proof of Proposition \ref{lax_pair}}\label{lax_proof}

A formula for the solution of a different Lax pair for the Hastings-McLeod solution to Painlev\'{e} II
in terms of the Airy function and $(\1-\A_s)^{-1}$ was obtained in \cite{baik_2005}
(see  (1.15) and the Remark after Lemma 1.4). 
This was obtained in a indirect way. 
A direct proof of the same formula in the spirit of \cite{TW} was obtained subsequently by Harold Widom (see the Remark after Lemma 1.4. of \cite{baik_2005}) but this proof was not published anywhere. 
It is possible to prove Proposition \ref{lax_pair} using this formula after some calculations.
Instead of following this route, we give a direct and self-contained proof of Proposition \ref{lax_pair} in this section for the benefit of the reader.
Our proof is similar to the calculation of Widom mentioned above.

\subsection{Preliminary work}
In order to prove Proposition \ref{lax_pair}, we first set up some notations and give some preliminary results which will be useful.  We follow the notations of the work \cite{TW}.\footnote{In \cite{TW}, the Airy kernel was defined without the parameter $s$ and the associated operators were defined in $[s,\infty)$. In this paper, we use the different convention that the Airy kernel contains the parameter $s$ and the associated operators are defined in $[0,\infty)$.}  Let $\D$ be the  differential operator, $(\D h)(x)= h'(x)$. 
In addition to the functions 
\begin{equation}\label{lax:6-1}
	Q:= (\1 - \A_s)^{-1} \B_s \de_0\,, \qquad R:= (\1 - \A_s)^{-1} \A_s \de_0\,,
\end{equation}
defined in (\ref{pr:2}), we introduce also the function
\begin{equation}\label{lax:5}
	P:= (\1 - \A_s)^{-1} \D \B_s \de_0\,.
\end{equation}
Let $a$ be the function
\begin{equation}\label{lax:6}
	a(x):=\Ai(x+s)=\B_s \de_0\,,
\end{equation}
and introduce also the notations
\begin{equation}\label{lax:6-a}
	q(s):= \langle \de_0, Q \rangle_0 = Q(0) \,, \qquad  p(s):= \langle \de_0, P \rangle_0 = P(0)\,,
\end{equation}
and
\begin{equation}\label{lax:7}
	u(s):= \langle a, Q \rangle_0 \,, \qquad v(s):= \langle a, P \rangle_0=\langle Q, \D a \rangle_0\,.
\end{equation}
It is shown in \cite{TW} that $q(s)$ as defined in (\ref{lax:6-a}) is in fact the Hastings-McLeod solution to the Painlev\'{e} II equation. 
For the convenience of the reader, we include a proof of this statement at the end of this subsection.

Recall the general identities for any operators $\U$ and $\V$ such that $\U$ depends on a parameter $s$:
\begin{eqnarray}\label{lax:8}
	\frac{\d }{\d s} (\1-\U)^{-1}&=& (\1-\U)^{-1} \left(\frac{\d}{\d s} \U \right) (\1-\U )^{-1}\,, \\
\label{lax:9}
	\V(\1-\U)^{-1} &=& (\1-\U)^{-1} [\V, \U](\1-\U)^{-1}+(\1-\U)^{-1}\V \, .
\end{eqnarray}
We will use these identities as well as the following identities for the operators $\A_s$ and $\B_s$, which are easy to check from their definitions (\ref{in:4}) and (\ref{pr:1}):\footnote{Here for two functions $f$ and $g$, the notation $f\otimes g$ stands for the rank one operator defined by $(f\otimes g)h:= \langle g,h\rangle f$.}
\begin{eqnarray}
\label{lax:10}
	\frac{\d}{\d s} \A_s &=&-a \otimes a \,,\\
\label{lax:11}
	\frac{\d}{\d s} \B_s &=& \D\B_s \,, \\
\label{lax:12}
	[\D, \A_s]  &=&  -a\otimes a + \A\de_0 \otimes \de_0 \, ,\\
\label{lax:13}
	[ \M,\A_s] &=&  a\otimes \D a - \D a \otimes a \, ,
\end{eqnarray}
where we have denoted by $\M$ the multiplication by $x$, $\M f(x)=xf(x)$. 

The key differential identities that we use to prove Proposition \ref{lax_pair} are summarized in the following lemma.
\begin{lem}\label{lem_diffeq}  We have the following identities:
\begin{eqnarray}
\label{lax:14}	
	\frac{\partial Q}{\partial s} &=& -u Q +P, \\
\label{lax:15}	
	\D Q &=& -u Q +P + q\RR , \\
\label{lax:16}
	\frac{\partial P}{\partial s} &=& (-2v + s+ \M)Q + uP,\\
\label{lax:17}	
	\D P &=& (-2v + s+ \M)Q + uP  + p\RR , \\
\label{lax:18}	
	\frac{\partial \RR}{\partial s} &=& -qQ,
\end{eqnarray}
and
\begin{eqnarray}
\label{lax:19}
	2v-u^2 &=&-q^2,\\
\label{lax:20}	
	q' &=& -uq+p.
\end{eqnarray}
Moreover, 
\begin{equation}
\label{lax:20+1}	
	\M \RR = pQ-qP. 
\end{equation}
\end{lem}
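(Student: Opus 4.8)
The plan is to compute each of the derivatives $\partial_s Q$, $\D Q$, $\partial_s P$, $\D P$, $\partial_s R$ and $\M R$ directly from the definitions \eqref{lax:6-1}, \eqref{lax:5}, using the two abstract resolvent identities \eqref{lax:8}–\eqref{lax:9} together with the four concrete operator identities \eqref{lax:10}–\eqref{lax:13}. The model for all of these computations is the Tracy–Widom derivation in \cite{TW}: differentiate the resolvent $(\1-\A_s)^{-1}$, then use a commutator identity to move the differential operator $\D$ (or the multiplication $\M$) past $\A_s$, and finally recognize the resulting rank-one pieces $a\otimes a$, $\A\de_0\otimes\de_0$, $a\otimes\D a$, $\D a\otimes a$ as producing the scalar functions $q, p, u, v$ defined in \eqref{lax:6-a}–\eqref{lax:7} when paired against $\de_0$ or $a$.

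Concretely, I would proceed as follows. First, for \eqref{lax:14}: apply \eqref{lax:8} with $\U=\A_s$ and use $\partial_s\A_s=-a\otimes a$ from \eqref{lax:10}, so that $\partial_s Q = (\1-\A_s)^{-1}(a\otimes a)(\1-\A_s)^{-1}\B_s\de_0 + (\1-\A_s)^{-1}(\partial_s\B_s)\de_0$; the first term is $\langle a, Q\rangle_0\,(\1-\A_s)^{-1}a = uQ$ — wait, one must track signs carefully here — and the second uses $\partial_s\B_s=\D\B_s$ from \eqref{lax:11}, giving $P$; collecting yields $-uQ+P$ after the signs are sorted. Next, for \eqref{lax:15}: write $\D Q = \D(\1-\A_s)^{-1}\B_s\de_0$, move $\D$ through the resolvent via \eqref{lax:9} with $\V=\D,\U=\A_s$ and the commutator \eqref{lax:12} $[\D,\A_s]=-a\otimes a+\A\de_0\otimes\de_0$; the $-a\otimes a$ piece reproduces the $-uQ$ term, the $\A\de_0\otimes\de_0$ piece produces $\langle\de_0,Q\rangle_0\,(\1-\A_s)^{-1}\A\de_0 = qR$, and the leftover $\D\B_s\de_0$ term gives $P$ — hence $-uQ+P+qR$. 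The identity \eqref{lax:16} for $\partial_s P$ is the same kind of computation starting from $P=(\1-\A_s)^{-1}\D\B_s\de_0$, again using \eqref{lax:8} and \eqref{lax:10}, but now one must also differentiate $\D\B_s\de_0 = a'$, which brings in the operator $\M$ via the Airy equation $a'' = (x+s)a$, i.e. $\D^2\B_s\de_0 = (\M+s)\B_s\de_0$; this is where the terms $s+\M$ appear, and the scalar $v=\langle a,P\rangle_0$ shows up from the rank-one factor. Identity \eqref{lax:17} for $\D P$ then parallels \eqref{lax:15}, using \eqref{lax:12} to commute $\D$ past the resolvent and the Airy equation to handle $\D(\D\B_s\de_0)$. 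Identity \eqref{lax:18} is the shortest: $R = (\1-\A_s)^{-1}\A_s\de_0 = (\1-\A_s)^{-1}\de_0 - \de_0$, so $\partial_s R = \partial_s(\1-\A_s)^{-1}\de_0$, and \eqref{lax:8} with \eqref{lax:10} gives $(\1-\A_s)^{-1}(a\otimes a)(\1-\A_s)^{-1}\de_0 = \langle a,(\1-\A_s)^{-1}\de_0\rangle_0 \cdot (\1-\A_s)^{-1}a$; recognizing $(\1-\A_s)^{-1}\de_0 = R+\de_0$ and $\langle a, R+\de_0\rangle_0 = \langle a, Q\rangle_0 \cdot(\text{something})$... more precisely one uses $\langle a, (\1-\A_s)^{-1}\de_0\rangle_0 = \langle(\1-\A_s)^{-1}a,\de_0\rangle_0 = Q(0) = q$, so $\partial_s R = -qQ$ after the sign. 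For \eqref{lax:20+1}, I would commute $\M$ past the resolvent in $\M R = \M(\1-\A_s)^{-1}\A_s\de_0$ (equivalently $\M(\1-\A_s)^{-1}\de_0$) using \eqref{lax:13}, $[\M,\A_s]=a\otimes\D a - \D a\otimes a$, which produces exactly the combination $pQ - qP$ once the pairings $\langle\D a, \cdot\rangle_0$ and $\langle a,\cdot\rangle_0$ are evaluated (noting $\langle \de_0, P\rangle_0 = p$ and $v = \langle Q,\D a\rangle_0$).

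The two purely scalar relations \eqref{lax:19}–\eqref{lax:20} I would derive last, once the five differential identities are in hand. For \eqref{lax:20}, evaluate \eqref{lax:15} (or \eqref{lax:14}) at $x=0$: $q'(s) = (\D Q)(0)$ — but here one must be careful, since $q(s)=Q(0;s)$ depends on $s$ both through the resolvent and, there is a subtlety, actually $q'(s) = \partial_s[Q(0;s)] = (\partial_s Q)(0) = -u q + p$ directly from \eqref{lax:14}, which is exactly \eqref{lax:20}. For \eqref{lax:19}, I would differentiate $u(s) = \langle a, Q\rangle_0$ in $s$, using $\partial_s a = \D a$ (from $\partial_s\B_s\de_0 = \D\B_s\de_0$) and \eqref{lax:14}, to get $u' = \langle\D a, Q\rangle_0 + \langle a, -uQ+P\rangle_0 = v - u^2 + v = 2v - u^2$; separately, a standard identity (again from \cite{TW}) gives $u' = -q^2$, so $2v-u^2 = -q^2$. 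I expect the main obstacle to be bookkeeping rather than conceptual: keeping every sign straight through the chain of resolvent and commutator manipulations (the defining identities have several sign asymmetries, e.g. the $\pm 2q'$ entries foreshadowed in $A(\zeta,s)$), and correctly handling the delta-function pairings at the boundary point $0$ — in particular distinguishing $\partial_s[f(0;s)]$ from $(\D f)(0;s)$, which differ precisely by the $qR$-type terms in \eqref{lax:15} and \eqref{lax:17}. A secondary technical point is justifying that $\D\B_s\de_0 = \Ai'(\cdot+s)$ and $\D^2\B_s\de_0 = (x+s)\Ai(x+s)$ as $L^2[0,\infty)$ identities so that the Airy ODE can be used freely inside the operator algebra; this is routine given the decay of $\Ai$ and its derivatives.
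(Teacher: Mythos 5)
Your plan mirrors the paper's approach closely: for \eqref{lax:14}--\eqref{lax:18} and \eqref{lax:20+1} you propose exactly the paper's chain of manipulations (differentiate or commute through $(\1-\A_s)^{-1}$ via \eqref{lax:8}--\eqref{lax:9}, then identify the resulting rank-one contributions as $q,p,u,v$), and \eqref{lax:20} is indeed obtained by evaluating \eqref{lax:14} at $0$ exactly as the paper does. For \eqref{lax:16}, be aware that one also needs \eqref{lax:9} and the multiplication commutator \eqref{lax:13} to move $\M$ across the resolvent after the Airy ODE produces $(\M+s)a$; your sketch only names \eqref{lax:8} and \eqref{lax:10}, which suffice for the $-vQ$ term but not for the remaining $-vQ+uP$.

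The one genuine gap is in your proposed derivation of \eqref{lax:19}. You correctly obtain $u'=2v-u^2$ from \eqref{lax:14}, but then appeal to ``a standard identity $u'=-q^2$'' without deriving it. In this setting that invocation is circular at worst and hides the real work at best: the paper in fact proves the implication in the \emph{opposite} direction (it establishes \eqref{lax:19} first and then deduces $u'=-q^2$ from $u'=2v-u^2$). The only self-contained way to obtain $u'=-q^2$ here is an integration by parts in $\langle Q,\D a\rangle_0$ using \eqref{lax:15} and the evaluation $\langle R,a\rangle_0=-a(0)+q$ --- and that computation, applied directly to $v=\langle Q,\D a\rangle_0$, is precisely the paper's proof of \eqref{lax:19}. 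So to close your argument you would end up reproducing the paper's calculation anyway; as written your plan leaves \eqref{lax:19} unproven. A minor slip in the same spirit: in your sketch of \eqref{lax:20+1} you cite ``$v=\langle Q,\D a\rangle_0$'' as one of the needed pairings, but the rank-one pieces of $[\M,\A_s]$ are paired against $(\1-\A_s)^{-1}\de_0$, so the relevant scalars are $\langle\D a,(\1-\A_s)^{-1}\de_0\rangle_0=p$ and $\langle a,(\1-\A_s)^{-1}\de_0\rangle_0=q$, not $v$.
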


\begin{proof}
Most of the equations (\ref{lax:14})-(\ref{lax:18}) are straightforward to check using the identities (\ref{lax:8})-(\ref{lax:13}), and in fact most of them appear in \cite{TW}.  As an illustration, let us show how to prove (\ref{lax:16}).  We have 
\begin{equation}\label{lax:21}
\begin{aligned}
\frac{\d P}{\d s} &= \left(\frac{\d}{\d s} (\1-\A_s)^{-1}\right) \D a + (\1-\A_s)^{-1} \left(\frac{\d}{\d s} \D a\right)  \\
&=-(\1-\A_s)^{-1} a \otimes a (\1-\A_s)^{-1} \D a+(\1-\A_s)^{-1} \left(\frac{\d}{\d s}  \D a \right) \,,
\end{aligned}
\end{equation}
where we have used (\ref{lax:8}) and (\ref{lax:10}). 
Since in general $(f\otimes g)h = f \langle g,h \rangle$, 
the first term is $-Q\langle a,P\rangle = -vQ$. 
Notice that $\left( \frac{\d}{\d s} \D a \right) (x)= \Ai''(x+s)=(x+s) \Ai(x+s)= \left( (\M+s)a \right)(x)$. 
It follows that (\ref{lax:21}) is
\begin{equation}\label{lax:22}
\begin{aligned}
\frac{\d P}{\d s} &=-vQ +(\1-\A_s)^{-1}(\M+s) a \\
&=-vQ +(\M+s)(\1-\A_s)^{-1} a+[(\1-\A_s)^{-1},\M]a \\
&=-vQ+(\M+s)Q -(\1-\A_s)^{-1}[\M, \A_s](\1-\A_s)^{-1}a  \,,
\end{aligned}
\end{equation}
where we have used (\ref{lax:9}).  We now apply (\ref{lax:13}) to obtain
\begin{equation}\label{lax:23}
\begin{aligned}
\frac{\d P}{\d s} &=-vQ+(\M+s)Q -(\1-\A_s)^{-1}a\otimes \D a(\1-\A_s)^{-1}a \\
&\hspace{4cm} +(\1-\A_s)^{-1}\D a\otimes a(\1-\A_s)^{-1}a \\
&=(\M+s-v)Q-Qv+Pu\,,
\end{aligned}
\end{equation}
which is (\ref{lax:16}).

The identity (\ref{lax:19}) can be obtained as follows.  Starting with 
\begin{equation}\label{lax:24}
v= \langle Q, \D a\rangle_0,
\end{equation}
we integrate by parts to obtain (as $q=Q(0)$)
\begin{equation}\label{lax:25}
\begin{aligned}
v&= -qa(0)-\langle\D Q, a\rangle_0 \\
&=-qa(0)-\langle -uQ+P+qR, a\rangle_0 \\
&=-qa(0)-u^2-v-q\langle R, a \rangle_0\,,
\end{aligned}
\end{equation}
where we have used (\ref{lax:15}). We now evaluate $\langle R, a\rangle_0$ as
\begin{equation}\label{lax:26}
	\langle R, a \rangle_0=\left\langle (\1-\A_s)^{-1} \A_s \de_0, a \right\rangle_0
	=\left\langle -\de_0+(\1-\A_s)^{-1}  \de_0, a \right\rangle_0=-a(0)+q\,.
\end{equation}
Combining (\ref{lax:26}) and (\ref{lax:25}) gives (\ref{lax:19}).
On the other hand,  (\ref{lax:20}) is simply (\ref{lax:14}) evaluated at zero.  

Finally we prove (\ref{lax:20+1}). 
Since $\M\de_0=0$, 
\begin{equation}\label{lax:37}
\begin{aligned}
	\M \RR  = \M (\1-\A)^{-1}\A \de_0 
	= \M (-\1 + (\1-\A)^{-1})\de_0 
	= \M (\1 -\A)^{-1}\de_0. 
\end{aligned}
\end{equation}
Hence, using $\M\de_0=0$ one more time, as well as (\ref{lax:9}) and (\ref{lax:13}),
\begin{equation}\label{lax:38}
\begin{aligned}
	\M \RR 
	&= [ \M,  (\1 -\A)^{-1}] \de_0  \\
	&= (\1-\A)^{-1} [\M, \A]  (\1-\A)^{-1}\de_0 \\
	&= (\1-\A)^{-1} (a\otimes \D a - \D a \otimes a) (\1-\A)^{-1}\de_0 \\
	&= \langle \D a, (\1-\A)^{-1}\de_0\rangle Q - \langle a, (\1-\A)^{-1}\de_0\rangle_0 P \\
	&= pQ-qP. 
\end{aligned}
\end{equation}

\end{proof}

\medskip


For completeness, we give a proof that $q(s)$ solves the Painlev\'e II equation. 
In addition to Lemma \ref{lem_diffeq}, we also have 
\begin{eqnarray}
\label{lax:ex1}
	u' &=& -q^2, \\
\label{lax:ex2}
	p' &=& -2vq+sq+up \;.
\end{eqnarray}
Indeed, by differentiating the definition (\ref{lax:7}) of $u$, $u'= \langle a', Q \rangle_0
+  \langle a, \frac{\d}{\d s}Q \rangle_0$. 
By using the definition (\ref{lax:7}) of $v$ and using (\ref{lax:14}), we find that 
$u'=2v-u^2$. Now (\ref{lax:19}) implies  (\ref{lax:ex1}). 
On the other hand, (\ref{lax:ex2}) follows by evaluating (\ref{lax:17}) at zero. 
By differentiating (\ref{lax:20}) and then using (\ref{lax:ex1}) and (\ref{lax:ex2}), 
we obtain $q''=q^3+sq -uq'-2vq+up$. Inserting (\ref{lax:20}) for $q'$ and then using (\ref{lax:19}), 
this becomes $q''=2q^3+sq$. Thus $q$ solves the Painlev\'e II equation.

\subsection{Differential equations for $\Phi_{1}$ and $\Phi_{2}$}

Here we show that $\Phi_{1}$ and $\Phi_{2}$, as defined in (\ref{pr:5}), 
\begin{equation}\label{lax:3-0}
	\Phi_{1}(\z,s) = \Th_{1}(0)+\langle \Th_{1}, R - Q \rangle_0\,,
	\qquad 
	\Phi_{2}(\z,s) = \Th_{2}(0)+\langle \Th_{2}, R + Q \rangle_0\,,
\end{equation}
satisfy the 
Lax pair equations (\ref{in:8}), i.e. the differential equations
\begin{eqnarray}
\label{lax:1}
	\frac{\d}{\d\z}\Phi_1(\z,s) &=& 4\z q\Phi_1(\z,s)+(4\z^2+s+2q^2+2q')\Phi_2(\z,s)\,, \\
\label{lax:2}
	\frac{\d}{\d\z}\Phi_2(\z,s) &=& -(4\z^2+s+2q^2-2q')\Phi_1(\z,s) -4\z q\Phi_2(\z,s)\,,  
\end{eqnarray}
and
\begin{eqnarray}
\label{lax:3}
	\frac{\d}{\d s}\Phi_1(\z,s) &=& q \Phi_1(\z,s)+\z\Phi_2(\z,s)\,,  \\
\label{lax:4}
	\frac{\d}{\d s}\Phi_2(\z,s) &=& -\z \Phi_1(\z,s)-q\Phi_2(\z,s)\,.
\end{eqnarray}
We will use Lemma \ref{lem_diffeq} as well as the identities
\begin{equation}
\label{lax:27}
	\D\Th_{1}= 2\z \Th_{2} \,, \qquad 
	\D\Th_{2}=- 2\z \Th_{1} \,, 
\end{equation}
which are evident from the definition (\ref{pr:3}) of the functions.  
Using (\ref{lax:27}), integrating by parts, and applying (\ref{lax:15}), it is easy to see that
\begin{equation}\label{lax:28}
	2\z \langle \Th_{1} , Q \rangle_0 
	= \langle - \D \Th_{2} , Q \rangle_0 
	= q\Th_{2}(0) + \langle \Th_{2}, -uQ+P+qR \rangle_0\,,
\end{equation}
and
\begin{equation}\label{lax:28-1}
	2\z \langle \Th_{2} , Q \rangle_0 = - q\Th_{1}(0) - \langle \Th_{1}, -uQ+P+qR \rangle_0\,.
\end{equation}
Similarly, using (\ref{lax:17}), we see that
\begin{equation}\label{lax:28-a}
	2\z \langle \Th_{1} , P \rangle_0= p \Th_{2}(0) + \langle \Th_{2}, (-2v+s+\M) Q + uP+pR\, \rangle_0\,,
\end{equation}
and 
\begin{equation}\label{lax:28-a-1}
	2\z \langle \Th_{2} , P \rangle_0= - p \Th_{1}(0) - \langle \Th_{1}, (-2v+s+\M) Q + uP+pR\, \rangle_0\,.
\end{equation}

\medskip

We now prove equation (\ref{lax:1}).  Notice that the $\z$ dependence comes entirely from the functions $\Th_{1}$.  Differentiating (\ref{lax:3-0}) with respect to $\z$ gives
\begin{equation}\label{lax:32}
\begin{aligned}
	\frac{\d}{\d \z} \Phi_1
	&=(4\z^2+s) \Th_2(0)+ \left\langle (4\z^2+s+ 2\M) \Th_2, \RR-Q \right\rangle_0 \\
&= (4\z^2+s)\Phi_2 -2(4\z^2+s)\langle F_2, Q \rangle_0 +2\langle F_2, \M (\RR-Q)  \rangle_0\, .
\end{aligned}
\end{equation}
We are therefore reduced to showing
\begin{equation}\label{lax:33}
\begin{aligned}
	&2\z q \Th_1(0)+(q^2+q')\Th_2(0)+ 
	2\z q \langle \Th_1, \RR-Q \rangle_0 + (q^2+q') \langle \Th_2, \RR+Q \rangle_0 \\
	&\qquad  =
	 -(4\z^2+s)\langle \Th_2, Q \rangle_0 +\langle \Th_2, \M(\RR-Q) \rangle_0\, .
\end{aligned}
\end{equation}
In fact, inserting (\ref{lax:28-1}) into the term 
$-4\z^2 \langle \Th_2, Q \rangle_0 = -2\z \langle 2\z \Th_2, Q \rangle_0 $ in~\eqref{lax:33}, 
we find that it is enough to show that 
\begin{equation}\label{lax:34}
\begin{split}
	&(q^2+q')\Th_2(0) - 2\z q \langle \Th_1, Q \rangle_0 + (q^2+q') \langle \Th_2, \RR+Q \rangle_0 \\
	&\qquad= -2\z u \langle \Th_1, Q \rangle_0 + 2\z \langle \Th_1, P \rangle_0
	 -s\langle \Th_2, Q \rangle_0 +\langle \Th_2, \M(\RR-Q) \rangle_0\, .
\end{split}
\end{equation}
We now apply (\ref{lax:28}) to the term $-2\z q \langle \Th_1, Q \rangle_0 $ and (\ref{lax:28-a-1}) to the term $2\z\langle \Th_2, P \rangle_0$ in (\ref{lax:34}) 
and find that it is enough to show that 
\begin{equation}\label{lax:35}
\begin{split}
	&(q'+uq-p)\Th_2(0)+(q'+2v+uq-u^2+q^2)\langle \Th_2, Q \rangle_0 \\
	&\qquad\qquad  +(q'+uq-p)\langle \Th_2, R\rangle_0 - q \langle \Th_2, P \rangle_0 = \langle \Th_2, \M R \rangle_0\,.
\end{split}
\end{equation}
Using~\eqref{lax:19} and~\eqref{lax:20}, this is equivalent to 
\begin{equation}\label{lax:36}
	p  \langle \Th_2, Q \rangle_0  - q \langle \Th_2, P  \rangle_0
	= \langle \Th_2, \M \RR \rangle_0\, .
\end{equation}
This follows from (\ref{lax:20+1}), 
which proves (\ref{lax:1}).  The proof of (\ref{lax:2}) is similar.

\medskip

We now prove equation (\ref{lax:3}).  Differentiating (\ref{pr:5}) with respect to $s$ gives
\begin{equation}\label{lax:29}
\begin{aligned}
	\frac{\d}{\d s} \Phi_1 
	&= \z \Th_2(0) + \langle \z \Th_2, \RR-Q  \rangle_0 
	+  \left\langle \Th_1, \frac{\d}{\d s} (\RR-Q) \right\rangle_0 \\
	&= \z \Phi_2 -2\z \langle \Th_2, Q \rangle_0
	+  \left\langle F_1, \frac{\d}{\d s} ( \RR-Q)  \right\rangle_0.
\end{aligned}
\end{equation}
Using (\ref{lax:28-1}), this is
\begin{equation}\label{lax:30}
\begin{aligned}
	\frac{\d}{\d s} \Phi_1 
	=  \z \Phi_2 + q\Th_1(0)  + \langle \Th_1, -uQ+P+qR \rangle_0
	+  \left\langle \Th_1, \frac{\d}{\d s} ( \RR-Q)  \right\rangle_0.
	\end{aligned}
\end{equation}
We now apply ~\eqref{lax:14} and ~\eqref{lax:18} to obtain
\begin{equation}\label{lax:31}
	\frac{\d}{\d s} \Phi_1 
	=  \z \Phi_2 + q\Th_1(0) +q \langle \Th_1, R-Q \rangle_0\,,
\end{equation}
which is clearly $\z \Phi_2 +q \Phi_1$.  This proves (\ref{lax:3}).  The proof of (\ref{lax:4}) is nearly identical.

\subsection{Asymptotics of $\Phi_{1}$ and $\Phi_2$}

In order to complete the proof of Proposition \ref{lax_pair}, we must show that the functions $\Phi_{1}$ 
and $\Phi_2$ as defined in (\ref{pr:5}) have the real asymptotics (\ref{in:13}). 
Since $\Th_{1}(0)$ and $\Th_2(0)$ are precisely the leading terms of the asymptotics (\ref{in:13}), 
it is enough to show that 
\begin{equation}\label{as:1}
	\langle \Th_{1}, R - Q \rangle_0 = O\left(\z^{-1}\right)\,,
	\qquad \langle \Th_{2}, R + Q \rangle_0 = O\left(\z^{-1}\right)\,,
\end{equation}
as $\z \to \pm \infty$.
Since the only dependence on $\z$ is in $\Th_{1}$ and $\Th_2$, 
from the definitions (\ref{pr:2}) of $Q$ and $P$, 
the asymptotics (\ref{as:1}) is proved if we show that 
\begin{equation}\label{as:2}
	\int_0^\infty e^{i\left(\frac43 \zeta^3+(s+2x)\zeta \right)} \Ai(x+s+\xi) dx = O\left(\z^{-1}\right)\,
\end{equation}
as $\z \to \pm \infty$ uniformly in $\xi\in [0,\infty)$ for a fixed $s\in \R$. 
But from the definition of the Airy function, the integral in (\ref{as:2}) equals 
\begin{equation}\label{as:3}
	\frac1{2\pi} \int_0^\infty e^{i\left(\frac43 \zeta^3+(s+2x)\zeta \right)} 
	\int_\Sigma e^{i\left(\frac43 \eta^3+(x+s+\xi)\eta \right)} d\eta dx \;,
\end{equation}
where $\Sigma$ can be taken to be a contour in $\C_+$ whose asymptotes are the rays of angle $\pi/6$ and $5\pi/6$.  
Changing the order of integrals and integrating in $x$, we obtain, by noting that the real part of $i(2\zeta+\eta)x<0$, that (\ref{as:3}) equals 
\begin{equation}\label{as:4}
	-\frac{e^{i\left(\frac43 \zeta^3+s\zeta \right)} }{2\pi i} 	
	\int_\Sigma \frac{e^{i\left(\frac43 \eta^3+(s+\xi)\eta \right)}}{2\zeta+\eta}  d\eta \;.
\end{equation}
This is clearly $O\left(\z^{-1}\right)$ uniformly in $\xi\in [0,\infty)$ and for a fixed $s\in\R$. 
Thus, (\ref{as:2}) is proved and this completes the proof of Proposition \ref{lax_pair}.

\section{Proof of Lemma \ref{lem1}}\label{lem_proof}

The following simple proof is due to Percy Deift. 
This simplifies the original proof of ours which was more involved.

We use the notations
\begin{equation}\label{prlem1}
	\K_y := \Pi_y \K \Pi_y\,, \qquad \bfR_y:=(\id-\K_y)^{-1}-\1 \,.
\end{equation}
We evaluate $\frac{\pd}{\pd y} \bfR_y(x_1, x_2)$ for $y>0$. 
Let $\ep>0$ and consider
\begin{equation}\label{prlem2}
\begin{aligned}
	\bfR_{y+\ep}-\bfR_y&=(\id-\K_{y+\ep})^{-1}-(\id-\K_y)^{-1} \\
&=(\id-\K_{y+\ep})^{-1}(\K_{y+\ep}-\K_{y})(\id-\K_y)^{-1} \\
	&=(\1+\bfR_{y+\ep})\big((\Pi_{y+\ep}-\Pi_y) \K \Pi_{y+\ep}+\Pi_{y} \K (\Pi_{y+\ep}-\Pi_y)\big)
(\1+\bfR_y) \,,
\end{aligned}
\end{equation}
where the last equality is obtained by adding and subtracting $\Pi_y \K \Pi_{y+\ep}$ inside the parentheses. 
Since $\bfR_{y+\ep}=  \Pi_{y+\ep}\bfR_{y+\ep}\Pi_{y+\ep}$, we have  
$\bfR_{y+\ep}(\Pi_{y+\ep}-\Pi_y)=0$.
Also note that 
\begin{equation}\label{prlem6}
	\lim_{\ep \downarrow 0}\frac{\Pi_{y+\ep}-\Pi_y}{\ep} = -\M_{\de_y}\,,
\end{equation} 
where $\M_{\de_y}$ is the operator of multiplication by $\delta_y$.  
Therefore, we obtain
\begin{equation}\label{prlem7}
	\frac{\pd}{\pd y_+} \bfR_y 
	= \big(-\M_{\de_y}\K \Pi_{y} - \Pi_{y} \K \M_{\de_y} 
	-\bfR_{y}\Pi_{y} \K \M_{\de_y}\big)(\1+\bfR_y)\,,
\end{equation}
where $\frac{\pd}{\pd y_+}$ means that this is a right-sided derivative.  
Note that $\M_{\de_y}\K= \M_{\de_y}\Pi_y\K$ and $\K \M_{\de_y}= \K\Pi_y \M_{\de_y}$. 
Hence (\ref{prlem7}) can be written as 
\begin{equation}\label{prlem7-1}
	\frac{\pd}{\pd y_+} \bfR_y 
	= -\big(\M_{\de_y}\K_{y} +\K_y \M_{\de_y}
	+\bfR_{y} \K_y \M_{\de_y}  \big)(\1+\bfR_y).
\end{equation}
From the definition (\ref{prlem1}) of $\bfR_y$ we have $\bfR_y\K_y=\K_y\bfR_y= \bfR_y-\K_y$. 
Using this, (\ref{prlem7-1}) simplifies to 
\begin{equation}\label{prlem7-2}
	\frac{\pd}{\pd y_+} \bfR_y 
	= -\M_{\de_y}\bfR_{y} - \bfR_y \M_{\de_y} - \bfR_y\M_{\de_y}\bfR_y.
\end{equation}
A similar computation for the left-sided derivative shows that (\ref{prlem7-2}) holds as a derivative from both sides. 
Hence we obtain
\begin{equation}\label{prlem7-3}
	\frac{\pd}{\pd y} \bfR_y (x_1, x_2)
	= -\de_y(x_1)\bfR_{y}(x_1, x_2) - \bfR_y(x_1, x_2)\de_y(x_2) - \bfR_y(x_1, y)\bfR_y(y, x_2).
\end{equation}
Now we integrate the both sides from $y=0$ to $y=\infty$ and find that 
 \begin{equation}\label{prlem7-4}
	- \bfR_0 (x_1, x_2)
	= -\bfR_{x_1}(x_1, x_2) - \bfR_{x_2}(x_1, x_2) - \int_0^\infty \bfR_y(x_1, y)\bfR_y(y, x_2) dy.
\end{equation}
This is precisely  (\ref{eq:26}).

\end{document}